\documentclass[12pt]{article}
\usepackage{amsmath,amsthm,amssymb,fullpage,url,cite,cleveref,bm,enumerate}
\newcommand{\remove}[1]{}
\sloppy

\newtheorem{theorem}{Theorem}[section]

\newtheorem{lemma}[theorem]{Lemma}

\newtheorem{definition}[theorem]{Definition}
\newtheorem{corollary}[theorem]{Corollary}
\newtheorem{conjecture}[theorem]{Conjecture}

\newcommand{\F}{\mathbb{F}}

\newcommand{\N}{\mathbb{N}}
\newcommand{\A}{\bold{a}}

\newcommand{\cS}{\mathcal{S}}
\newcommand{\cV}{\mathcal{V}}
\newcommand{\etal} {{et al. }}

\title{MDS matrices over small fields:\\
A proof of the GM-MDS conjecture}
\author{Shachar Lovett\thanks{Department of Computer Science and Engineering, University of California, San Diego. 
{\tt slovett@cs.ucsd.edu.} Research supported by NSF CAREER award 1350481 and CCF award 1614023.}}

\begin{document}
\maketitle

\begin{abstract}
An MDS matrix is a matrix whose minors all have full rank. A question arising in coding theory is what
zero patterns can MDS matrices have. There is a natural combinatorial characterization (called the MDS condition)
which is necessary over any field, as well as sufficient over very large fields by a probabilistic argument.

Dau et al. (ISIT 2014) conjectured that the MDS condition is sufficient over small fields as well, where
the construction of the matrix is algebraic instead of probabilistic. This is known as the GM-MDS conjecture. Concretely,
if a $k \times n$ zero pattern satisfies the MDS condition, then they conjecture that there exists an MDS matrix with this zero pattern
over any field of size $|\F| \ge n+k-1$. In recent years, this conjecture was proven in several special cases.
In this work, we resolve the conjecture.
\end{abstract}

\section{Introduction}

An MDS matrix is a matrix whose minors all have full rank.
These matrices arise naturally in coding theory, as they are generating matrices for MDS (Maximally Distance Separable) codes.
A question arising in coding theory, motivated by applications in multiple access networks \cite{halbawi2014distributed,dau2015simple}
and in secure data exchange \cite{yan2013algorithms,yan2014weakly},
is what zero patterns can MDS matrices have. Namely, how sparse can MDS matrices be?

There is a natural combinatorial characterization on the allowed zero patterns, called the MDS condition.
Let $A$ be a $k \times n$ MDS matrix with $k \le n$.
We can describe its zero/nonzero pattern by a set system $S_1,\ldots,S_k \subset [n]$, where $S_i=\{j \in [n]: A_{i,j}=0\}$.

There are several restrictions on the structure of such set systems. Clearly, any row of $A$ can have at most $k-1$ zeros,
so $|S_i| \le k-1$ for all $i$. Similarly, any two rows of $A$ can have at most $k-2$ common zeros, so $|S_i \cap S_j| \le k-2$
for all $i \ne j$. In general, this is known as the \emph{MDS condition} on the set system:
\begin{equation}
\tag{$\star$}
|I|+\left| \bigcap_{i \in I} S_i \right| \le k \qquad \forall I \subseteq [k], I \text{ nonempty}.
\end{equation}

It is known that the MDS condition is also sufficient for the existence of MDS matrices with zero pattern given by the
set system, if the underlying field is large enough.
Concretely, let $S_1,\ldots,S_k \subset [n]$ be a set system which satisfies the MDS condition.
Let $\F$ be the underlying field, and assume that $|\F| > {n \choose k}$. Let $A$ be a randomly chosen $k \times n$
matrix over $\F$, where $A_{i,j}=0$ if $j \in S_i$, and otherwise $A_{i,j} \in \F$ is chosen uniformly and independently.
Such a matrix $A$ is an MDS matrix with positive probability. The reason is that the number
of maximal $k \times k$ minors of $A$ is ${n \choose k}$, and the MDS condition implies that the determinants of these minors are not identically zero. So, each minor has a probability of $|\F|^{-1}$ to be singular, and by the union bound, if $|\F| > {n \choose k}$, then with positive
probability all minors are nonsingular. This bound was improved to $|\F| > {n-1 \choose k-1}$ in \cite{dau2013balanced}.

Dau {et al.} \cite{dau2014existence} conjectured that the MDS condition is sufficient over small fields as well.
This is known as the GM-MDS conjecture. Concretely,
if a $k \times n$ zero pattern satisfies the MDS condition, then there exists an MDS matrix with this zero pattern
over any field of size $|\F| \ge n+k-1$. Clearly, if this is true then a different argument than the probabilistic
argument above would be needed.

\begin{conjecture}[GM-MDS conjecture \cite{dau2014existence}]
\label{conj:gmmds}
Let $S_1,\ldots,S_k \subset [n]$ be a set system which satisfies the MDS condition.
Then for any field $\F$ with $|\F| \ge n+k-1$, there exists a $k \times n$ MDS matrix $A$ over $\F$ with $A_{i,j}=0$ whenever $j \in S_i$.
\end{conjecture}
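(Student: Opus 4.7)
My plan is to realize the zero pattern using a generalized Reed--Solomon construction, reducing the MDS property to polynomial non-vanishing identities in the evaluation points, and to exploit the bound $|\F|\ge n+k-1$ only through distinctness of these evaluations.

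\textbf{Step 1 (Combinatorial completion).} I would first reduce to the case where every row has exactly $k-1$ prescribed zeros. Concretely, I would prove a \emph{completion lemma}: any MDS set system $S_1,\dots,S_k\subset[n]$ can be extended to $\bar S_1,\dots,\bar S_k\subset[n+k-1]$ with $\bar S_i\cap[n]=S_i$, $|\bar S_i|=k-1$, and the MDS condition $(\star)$ still holding, using only the new indices $\{n+1,\dots,n+k-1\}$ as padding. Any MDS matrix for the padded system on $n+k-1$ columns restricts, by dropping the last $k-1$ columns, to an MDS matrix with the original zero pattern on $[n]$. I expect this lemma to follow from a Hall/greedy argument: the slack in $(\star)$ for rows with $|S_i|<k-1$ provides exactly the room needed to absorb the fresh indices without creating a violation.

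\textbf{Step 2 (Reed--Solomon construction).} Set $N=n+k-1$ and pick distinct $\omega_1,\dots,\omega_N\in\F$, which exists because $|\F|\ge N$. Define degree-$(k-1)$ polynomials $p_i(x)=\prod_{s\in\bar S_i}(x-\omega_s)$ and put $A_{i,j}=p_i(\omega_j)$ for $j\in[n]$. By construction $A_{i,j}=0$ iff $j\in \bar S_i\cap [n]=S_i$, so the zero pattern is correct. It remains to show that every $k\times k$ minor of $A$ is nonzero.

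\textbf{Step 3 (Minor identity).} For each $T\subseteq[n]$ with $|T|=k$, consider $D_T=\det[p_i(\omega_t)]_{i\in[k],\,t\in T}$, viewed as a polynomial in $\omega_1,\dots,\omega_N$. Two columns coincide whenever $\omega_t=\omega_{t'}$ for distinct $t,t'\in T$, so the Vandermonde $V_T=\prod_{t<t'\in T}(\omega_{t'}-\omega_t)$ divides $D_T$. A column-wise degree count in the variables $\{\omega_t:t\in T\}$ shows that the quotient $Q_T=D_T/V_T$ is a polynomial only in the variables $\omega_s$ for $s\in\bigcup_i\bar S_i\setminus T$. The goal then reduces to showing that under the MDS condition, $Q_T$ factors as a product of differences $(\omega_a-\omega_b)$ with $a\ne b$, so that distinctness of the $\omega$'s already forces $Q_T\ne 0$.

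\textbf{Step 4 (Induction via an exchange lemma --- the main obstacle).} The crux of the proof is establishing the factorization of $Q_T$; I expect no ``large field'' slack will be needed, only the distinctness of the $\omega$'s guaranteed by $|\F|\ge n+k-1$. I would attack this by induction on some complexity measure of the set system (for instance $k$, together with a secondary parameter like the size of $\bigcap_{i\in I}\bar S_i$ that witnesses tightness in $(\star)$). The inductive step should rest on an \emph{exchange lemma}: when the system is not rigid, one can swap an element $s$ from some $\bar S_i$ to some $\bar S_j$ while maintaining the MDS condition, and relate $D_T$ for the new system to $D_T$ for the old via an explicit algebraic identity. The rigid/extremal systems --- those in which $(\star)$ is tight on enough subfamilies that no exchange is available --- should be few enough that $Q_T$ can be read off directly. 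Finding the right exchange and the right inductive invariant is the real technical challenge; the MDS condition must be used precisely to certify that some legal exchange always exists outside the base cases, and to identify the extremal systems where the induction terminates.
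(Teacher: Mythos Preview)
Your Steps 1--2 recover exactly the algebraic GM-MDS framework of Dau \etal that the paper also adopts, but two concrete claims fail. First, the completion lemma in Step~1 is false as stated: if all $S_i=\emptyset$, then each $\bar S_i$ must be the full set $\{n+1,\dots,n+k-1\}$, so $|\bar S_1\cap\bar S_2|=k-1$ violates $(\star)$. (The paper instead cites the Dau \etal reduction that pads \emph{within} $[n]$.) Second, and more seriously, the factorization claim in Step~3 is false. Take $k=3$ and $\bar S_1=\{1,2\},\ \bar S_2=\{3,4\},\ \bar S_3=\{5,6\}$; then $Q_T=\det M$ (the coefficient matrix of $p_1,p_2,p_3$) is a degree-$3$ multilinear polynomial in $\omega_1,\dots,\omega_6$ that is \emph{not} divisible by any $(\omega_a-\omega_b)$, and indeed there exist distinct $\omega$'s with $\det M=0$ (e.g.\ choose $p_3$ on the line through $p_1,p_2$). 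So distinctness alone cannot force $Q_T\ne 0$; the bound $|\F|\ge n+k-1$ enters not through the number of evaluation points but through a degree/Schwartz--Zippel argument applied to $\det M\cdot\prod_{j<j'}(a_j-a_{j'})$, which has individual degree at most $n+k-2$.

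The decisive gap, however, is Step~4. What the paper actually proves is that $\det M$ is a \emph{nonzero} polynomial in the formal variables $a_1,\dots,a_n$ (equivalently, the $p_i$ are linearly independent over $\F(\A)$), and the key idea is \emph{not} an exchange move among the sets. Rather, the paper strengthens the inductive hypothesis: it passes from set systems to vectors $v_i\in\{0,1\}^{n-1}\times\N$ (property $V^*(k)$), allowing arbitrary multiplicity at a single distinguished coordinate. This extra degree of freedom is precisely what closes the induction: the pivotal step (\Cref{lemma:onevec} in the paper) reduces $n$ by substituting $a_{n-1}$ for $a_n$, which merges two roots and raises the multiplicity at the last coordinate --- a move that only stays inside the hypothesis because of the $V^*(k)$ relaxation. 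The remaining lemmas handle tight subfamilies and the terminal case $n=k$. Your proposed exchange (moving $s$ from $\bar S_i$ to $\bar S_j$) does not decrease any of $n,k,m,d$, so it cannot by itself drive an induction, and you have not identified the ``rigid'' base cases; earlier partial results stalled for exactly this reason. The missing ingredient is the multiplicity-at-one-point generalization (or, in the concurrent Yildiz--Hassibi proof, an analogous generalization via set multiplicities).
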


We prove \Cref{conj:gmmds} in this work.

\begin{theorem}
\label{thm:main}
\Cref{conj:gmmds} is correct.
\end{theorem}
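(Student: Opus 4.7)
The plan is to realize the MDS matrix through polynomial evaluations and to reduce the existence problem to the non-vanishing of certain formal polynomial determinants. Fix $n$ distinct field elements $\omega_1,\ldots,\omega_n$ (which requires $|\F| \ge n$), and for each row $i$ set $A_{i,j} = f_i(\omega_j)$, where $f_i \in \F[x]$ has degree at most $k-1$ and vanishes on $\{\omega_j : j \in S_i\}$. The prescribed zero pattern is then automatic, and $A$ is MDS precisely when, for every $k$-subset $T \subseteq [n]$, the determinant $\det(f_i(\omega_j))_{i \in [k],\, j \in T}$ is nonzero.

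First I would establish a preprocessing lemma: one can enlarge the sets $S_i$ so that $|S_i| = k-1$ for every $i$ while preserving the MDS condition. This is a Hall-type combinatorial argument in which, whenever some $|S_i| < k-1$, the slack in the MDS inequalities is used to produce an index that can safely be adjoined to $S_i$. After this reduction, each $f_i$ is determined up to a scalar as $\prod_{j \in S_i}(x-\omega_j)$, which fixes the degrees of freedom and gives a clean symbolic form to every minor.

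The heart of the proof is to treat $\omega_1,\ldots,\omega_n$ as formal variables and to prove that, for every $k$-subset $T$, the polynomial
\[
D_T(\omega_1,\ldots,\omega_n) \;=\; \det\!\left(\prod_{j' \in S_i}(\omega_j-\omega_{j'})\right)_{i \in [k],\, j \in T}
\]
is not identically zero. I would expand the determinant over permutations $\sigma\colon[k]\to T$ and try to isolate a monomial whose coefficient is forced to be nonzero because it arises from a unique permutation. Producing such a monomial reduces to a system-of-distinct-representatives question across rows and columns, for which the MDS condition supplies exactly the Hall-type input needed. I expect this step to be the main obstacle: controlling which cross terms can or cannot cancel, and tying the analysis to the MDS inequalities in a way that works uniformly over all $T$, is where essentially all the content of the conjecture lives. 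I anticipate needing an induction on a parameter such as $\sum_i |S_i|$, in which one peels off a row where the MDS bound is tight and applies the inductive hypothesis to a reduced system.

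Once each $D_T$ is known to be a nonzero polynomial, the final step is to specialize $\omega_1,\ldots,\omega_n$ to concrete distinct elements of $\F$ so that all $\binom{n}{k}$ determinants $D_T$ evaluate to nonzero elements simultaneously. A direct Schwartz--Zippel argument is too weak to reach $|\F| \ge n+k-1$, so one must use a more structured specialization---for instance $\omega_j = g^j$ for a primitive element $g$ of $\F^*$---which collapses the simultaneous nonvanishing into a univariate question whose total degree is controlled by roughly $n+k-2$, matching the conjectured field size.
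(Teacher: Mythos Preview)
Your reduction is exactly the algebraic framework of Dau et al.\ that the paper adopts, so the setup is fine; the gap is in the heart of the argument. Your plan for proving each $D_T\not\equiv 0$---isolate a non-cancelling monomial via a Hall/SDR argument, with an induction on $\sum_i|S_i|$---is essentially the line of attack pursued in the earlier partial results the paper cites, and it is not known to close in general. The MDS inequalities do give Hall's condition, but that only guarantees some nonzero term in the permutation expansion of $D_T$, not that cancellation among terms is avoided; and the natural inductive moves (peel off a row, shrink some $S_i$) do not keep you inside the original class of set systems with $|S_i|=k-1$, so the induction stalls. You correctly flag this as ``the main obstacle,'' but the proposal contains no mechanism for overcoming it.

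The paper's key idea, absent from your proposal, is to \emph{enlarge} the class of objects so that the induction does close. It replaces the sets $S_i$ by vectors $v_i\in\{0,1\}^{n-1}\times\N$ (one distinguished evaluation point may be a root with arbitrary multiplicity), replaces each single polynomial $f_i$ by a block $\{x^e\prod_j(x-a_j)^{v_i(j)}:0\le e\le k-1-|v_i|\}$, and proves linear independence of the whole union under a generalized MDS-type condition $V^*(k)$. A minimal-counterexample argument then goes through cleanly: one lemma merges two coordinates (decreasing $n$), another increments the multiplicity at the distinguished point (decreasing the total number of polynomials), another splits off a tight subsystem (decreasing $m$), and each move remains inside the $V^*(k)$ class. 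This strengthening of the inductive hypothesis is precisely the new content of the paper; without something comparable, your monomial-isolation plan is the strategy that earlier work could only push through in special cases. (A minor side point: once linear independence over $\F(\A)$ is established, every $D_T$ factors as $\det(M)$ times a Vandermonde sub-determinant for a single coefficient matrix $M$ independent of $T$, and the bound $|\F|\ge n+k-1$ follows from an ordinary degree count already in Dau et al.; no primitive-root specialization is needed.)
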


First, we describe an algebraic framework introduced by Dau \etal \cite{dau2014existence}
towards proving \Cref{conj:gmmds}.

\subsection{The algebraic GM-MDS conjecture}

Dau \etal \cite{dau2014existence} formulated an algebraic conjecture that implies \Cref{conj:gmmds}:
if $S_1,\ldots,S_k$ is a set system that satisfies the MDS condition, then there exists a Generalized Reed-Muller code
with zeros in locations prescribed by the set system. Otherwise put, the matrix $A$ can be factored as the product of a $k \times k$ invertible matrix
and a $k \times n$ Vandermonde matrix. Before explaining these ideas further, we first set up some notations.

Let $\F$ be a finite field, and let $x,a_1,\ldots,a_n$ be formal variables,
where we shorthand $\A=(a_1,\ldots,a_n)$. We use the standard notations $\F[\A,x]$ for the ring of polynomials over $\F$ in the variables
$\A,x$; $\F(\A)$ for the field of rational functions over $\F[\A]$; and $\F(\A)[x]$ for the ring of univariate polynomials in $x$ over $\F(\A)$.
Given a set $S \subset [n]$ define a polynomial $p=p(S) \in \F[\A,x]$ as follows:
$$
p(\A,x):=\prod_{i \in S} (x-a_i).
$$
Given a set system $\cS=\{S_1,\ldots,S_k\}$ define $P(\cS):=\{p(S_1),\ldots,p(S_k)\}$.

Let $\cS=\{S_1,\ldots,S_k\}$ be a set system which satisfies the MDS condition. It is possible to assume without loss of generality
that each $S_i$ is maximal, namely that $|S_i|=k-1$ for all $i \in [k]$.
For example, if we are allowed to increase $n$ then we can replace each $S_i$ with $S_i \cup T_i$
where $|T_i|=k-1-|S_i|$ and $T_1,\ldots,T_k,[n]$ are pairwise disjoint. An improved reduction is given in \cite{dau2014existence}
which does not require increasing $n$.

Either way, under this assumption the polynomials $P(\cS)$ form a set of $k$ polynomials of degree $k-1$,
which we denote by $p_1,\ldots,p_k$.
Define the $k \times n$ matrix $A$ as
$A_{i,j} = p_i(a_j)$. Note that entries of $A$ are polynomials in $\F[\A]$. The condition that all $k \times k$ minors
of $A$ are nonsingular is equivalent to the condition that the polynomials $P(\cS)$ are linearly independent
over $\F(\A)$ (here, we view the polynomials as elements of $\F(\A)[x]$ instead of as elements of $\F[\A,x]$).
If this is the case, then one can use the Schwartz-Zippel
lemma and show that the formal variables $a_1,\ldots,a_n$ can be replaced with distinct field elements from $\F$, while
still maintaining the property that all $k \times k$ minors of $A$ are nonsingular.
The bound on the field size $|\F| \ge n+k-1$ arises from the degrees of the polynomials obtained in the process.
For details we refer to the original paper \cite{dau2014existence}.

This motivated \cite{dau2014existence} to propose the following algebraic conjecture, which implies \Cref{conj:gmmds}.
\begin{conjecture}[Algebraic GM-MDS conjecture \cite{dau2014existence}]
\label{conj:gmmds-alg}
Let $S_1,\ldots,S_k \subset [n]$ be a set system which satisfies the MDS condition, and where $|S_i|=k-1$ for all $i$.
Then the set of polynomials $P(\cS)$ are linearly independent over $\F(\A)$.
\end{conjecture}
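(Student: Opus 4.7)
My plan is to prove \Cref{conj:gmmds-alg} by induction on $k$, via a contradiction-and-reduction argument. The base case $k=1$ is immediate: $|S_1|=0$ forces $p_1=1$, which is trivially nonzero. For the inductive step, I assume the conjecture holds for all $k'<k$ and suppose, toward contradiction, that $\cS=\{S_1,\ldots,S_k\}$ is a counterexample --- a set system on $[n]$ satisfying the MDS condition with $|S_i|=k-1$ for each $i$, yet admitting a nontrivial relation $\sum_{i=1}^k \alpha_i p_i = 0$ with $\alpha_i \in \F(\A)$, where after clearing denominators we may take $\alpha_i \in \F[\A]$. From this I aim to manufacture a strictly smaller counterexample, contradicting the inductive hypothesis.

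The combinatorial leverage comes from the MDS condition itself. Applied with $I=[k]$, it forces $\bigcap_i S_i = \emptyset$, and more generally $|\bigcap_{i \in I} S_i| \le k-|I|$. Dualizing, the complements $T_i := [n]\setminus S_i$, each of size $n-k+1$, satisfy a defect Hall-type condition $|\bigcup_{i \in I} T_i| \ge n-k+|I|$. Hence there exist distinct $j_1,\ldots,j_k \in [n]$ with $j_i \in T_i$, so the ``diagonal'' evaluations $p_i(a_{j_i}) = \prod_{l \in S_i}(a_{j_i}-a_l)$ are genuinely nonzero polynomials in $\F[\A]$. This SDR serves as a combinatorial skeleton for the reduction, and it also reformulates the problem as showing that some $k\times k$ submatrix of the evaluation matrix $A_{i,j}=p_i(a_j)$ has nonzero determinant in $\F[\A]$.

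The heart of the argument is a reduction that takes $(\cS,\{\alpha_i\})$ and produces $(\cS',\{\alpha'_j\})$ with the parameter $k$ strictly decreased, such that $\cS'$ still satisfies the MDS condition with set sizes $k-2$ and $\sum_j \alpha'_j\, p(S'_j) = 0$ is a nontrivial dependence among the reduced polynomials. The natural candidates to try are: use the dependence to solve for some $\alpha_{i^*}$ and eliminate the index $i^*$, landing at $k-1$ sets; specialize a variable $a_{j^*}$ by identifying it with another $a_{j'}$ and collapsing $\cS$ to a set system on $[n-1]$; or merge two carefully chosen sets $S_i,S_{i'}$ after a suitable substitution so that the resulting system has parameter $k-1$. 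In each candidate, one must simultaneously re-verify MDS for the reduced system and trace the coefficients $\alpha_i$ through the reduction to confirm the new dependence is nontrivial.

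The main obstacle --- and presumably the source of the difficulty that kept the conjecture open --- is that the MDS condition is fragile under local modification: an innocuous-looking deletion or identification typically causes some intersection bound $|I|+|\bigcap_{i \in I} S'_i| \le k-1$ to fail at a newly tight subset, kicking us outside the MDS regime. The proof must therefore identify a structural feature of the counterexample, guaranteed jointly by MDS and by the existence of a nontrivial dependence, that pinpoints a reduction which is combinatorially safe (MDS is preserved) and algebraically faithful (the transported dependence remains nonzero). Locating such a reduction and carrying out the bookkeeping that propagates the $\alpha_i$'s is where I expect the bulk of the work to reside.
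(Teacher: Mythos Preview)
Your proposal correctly identifies the overall architecture (minimal counterexample plus a reduction) and even names the central obstacle, but it stops precisely where the real content begins: you do not exhibit a reduction, you only list candidates and concede that ``locating such a reduction \ldots\ is where I expect the bulk of the work to reside.'' As written, this is a plan, not a proof.

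More seriously, the paper's proof suggests that your inductive framework is too narrow to close. You induct on $k$ alone, staying within the rigid format ``exactly $k$ sets, each of size exactly $k-1$.'' Each of your candidate moves runs into this rigidity: deleting an index leaves $k-1$ sets of size $k-1$ (wrong size for the $(k-1)$-hypothesis); identifying two variables reduces $n$ but not $k$; merging two sets produces a set of the wrong cardinality. The paper's key idea, absent from your plan, is to \emph{generalize the statement before inducting}: it passes to families $\cV=\{v_1,\ldots,v_m\}$ with $m$ decoupled from $k$, with $|v_i|$ allowed to vary below $k-1$, and with one distinguished coordinate permitted multiplicity $>1$ (the property $V^*(k)$). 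This enlarged class is stable under the needed reductions: one lemma drops $k$ by peeling off a common factor, one splits the family along a tight subset (reducing $m$, not $k$), one merges two coordinates (reducing $n$), and one increases a multiplicity at the distinguished coordinate (reducing $d=|P(k,\cV)|$). None of these moves stays inside the original ``$k$ sets of size $k-1$'' world, which is exactly why direct induction on $k$ had resisted earlier attempts. Your SDR observation is correct but is not what drives the argument.
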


We remark that given any polynomials $p_1,\ldots,p_k \in \F[\A,x]$ (for example, the polynomials appearing in $P(\cS)$),
an equivalent condition to the polynomials
being linearly independent over $\F(\A)$ is the following: for any polynomials $w_1,\ldots,w_k \in \F[\A]$, not all zero,
it holds that
$$
\sum_{i=1}^k w_i(\A) p_i(\A,x) \ne 0.
$$

Following \cite{dau2014existence}, several works~\cite{halbawi2014distributed,heidarzadeh2017algebraic,yildiz2018further}
attempted to resolve the GM-MDS conjecture. They showed that \Cref{conj:gmmds-alg} holds in several special cases, but the
general case remained open. In this work we prove \Cref{conj:gmmds-alg}, which implies \Cref{conj:gmmds}.

\subsection{A generalized conjecture}
We start by considering a more general condition. Let $v \in \N^n$ be a vector, where $\N=\{0,1,2,\ldots\}$ stands
for non-negative integers. The coordinates of $v$ are $v=(v(1),\ldots,v(n))$. We shorthand $|v|=\sum v(i)$.
Given vectors $v_1,\ldots,v_m \in \N^n$ define $\bigwedge v_i \in \N^n$ to be their coordinate-wise minimum:
$$
\bigwedge_{i \in [m]} v_i := (\min(v_1(1),\ldots,v_m(1)),\ldots,\min(v_1(n),\ldots,v_m(n))).
$$
Note that if $v_1,\ldots,v_m \in \{0,1\}^n$ are indicator vectors of sets $S_1,\ldots,S_m \subset [n]$,
then $\bigwedge v_i$ is the indicator vector of $\cap S_i$.

Given a parameter $k > |v|$ define a set of polynomials in $\F[\A,x]$:
$$
P(k,v) := \left\{\prod_{j \in [n]} (x-a_j)^{v(j)} x^e: e=0,\ldots,k-1-|v|\right\}.
$$
Note that $P(k,v)$ consists of $k-|v|$ polynomials of degree $\le k-1$, which form a basis for the linear space of polynomials
of degree $\le k-1$ which have $v(j)$ roots at each $a_j$. Furthermore, note that
if $v$ is the indicator vector of a set $S \subset [n]$
of size $|S|=k-1$, then $P(k,v) = \{p(S)\}$. Given a set of vectors $\cV=\{v_1,\ldots,v_m\} \subset \N^n$ define
$$
P(k,\cV) := P(k,v_1) \cup \ldots \cup P(k,v_m).
$$
We use in this paper the convention that set union can result in a multiset. So for example, if the same polynomial appears in multiple $P(k,v_i)$
then it appears multiple times in $P(k,\cV)$. Under this assumption we always have the identity:
$$
|P(k,\cV)| = |P(k,v_1)| + \ldots + |P(k,v_m)|.
$$
The following definition is the natural extension of the MDS condition to vectors.

\begin{definition}[Property $V(k)$]
Let $\cV=\{v_1,\ldots,v_m\} \subset \N^n$ and $k \ge 1$ be an integer. We say that $\cV$ satisfies $V(k)$ if it satisfies:
\begin{enumerate}
\item[(i)] $|v_i| \le k-1$ for all $i \in [m]$.
\item[(ii)] For all $I \subseteq [m]$ nonempty,
$\sum_{i \in I} (k-|v_i|) + \left|\bigwedge_{i \in I} v_i\right| \le k$.
\end{enumerate}
\end{definition}
Note that when $m=k$ and $v_1,\ldots,v_k$ are indicators of sets $S_1,\ldots,S_k \subset [n]$ of size $|S_i|=k-1$, then property $V(k)$ is equivalent
to the MDS condition for $S_1,\ldots,S_k$.

Observe that in general, if $\cV$ satisfies $V(k)$ then $P(k,\cV)$ contains $\sum_{i=1}^m (k-|v_i|) \le k$ polynomials
of degree $\le k-1$. The following conjecture is the natural extension of \Cref{conj:gmmds-alg} to vectors.

\begin{conjecture}
\label{conj:vecmds}
Let $\cV \subset \N^n$ and $k \ge 1$. Assume that $\cV$ satisfies $V(k)$.
Then the polynomials in $P(k,\cV)$ are linearly independent over $\F(\A)$.
\end{conjecture}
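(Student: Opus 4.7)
The plan is to prove Conjecture~\ref{conj:vecmds} by induction on the total multiplicity $s = \sum_{i=1}^{m} |v_i|$ (refined, as needed, by $n$ and $m$). A hypothetical dependence in $P(k,\cV)$ takes the form $\sum_i W_i(x)\, q_{v_i}(x) = 0$, where $q_v(x) := \prod_j(x-a_j)^{v(j)}$ and $\deg_x W_i \le k - 1 - |v_i|$. Dividing through by the common factor $q_{\bigwedge_i v_i}$, we may first reduce to the case $\bigwedge_i v_i = 0$, under which every coordinate $j \in [n]$ is avoided by at least one $v_i$.

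Base cases ($s = 0$, $n = 0$, or $m = 1$) are immediate: $P(k,\cV)$ is either contained in $\{1, x, \ldots, x^{k-1}\}$ or is a single nonzero polynomial. For the inductive step, I would pick a pair $(i^*, j^*)$ with $v_{i^*}(j^*) \ge 1$ and form the reduced family $\cV'$ by replacing $v_{i^*}$ with $v'_{i^*} := v_{i^*} - e_{j^*}$, thereby decreasing $s$ by one. Since $q_{v_{i^*}}(x) = (x - a_{j^*})\, q_{v'_{i^*}}(x)$, the dependence rewrites as
\[
\bigl(W_{i^*}(x)(x - a_{j^*})\bigr)\, q_{v'_{i^*}}(x) + \sum_{i \ne i^*} W_i(x)\, q_{v_i}(x) = 0,
\]
with the new coefficient $W_{i^*}(x)(x - a_{j^*})$ having $x$-degree at most $k - 1 - |v'_{i^*}|$, as required. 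This is a nonzero dependence in $P(k,\cV')$, so provided $\cV'$ also satisfies $V(k)$, the inductive hypothesis delivers the contradiction.

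The crux, and the main obstacle, is that the decrement $v_{i^*} \mapsto v'_{i^*}$ need not preserve property $V(k)$: a short calculation shows condition (ii) breaks for a subset $I \ni i^*$ precisely when $I$ is \emph{tight} (equality in (ii)) and $v_{i^*}(j^*) > \min_{i \in I \setminus \{i^*\}} v_i(j^*)$. Thus $(i^*, j^*)$ must satisfy $v_{i^*}(j^*) = (\bigwedge_{i \in I} v_i)(j^*) \ge 1$ for every tight $I$ containing $i^*$. To exhibit such a pair I plan to exploit the lattice structure of tight subsets: by supermodularity of the map $I \mapsto \sum_{i \in I}(k - |v_i|) + |\bigwedge_{i \in I} v_i|$, the tight sets are closed under union and intersection, so there is a unique maximal tight set $I^{\max}_{i^*}$ containing any given $i^*$, and it suffices to choose $j^*$ in the support of $\bigwedge_{i \in I^{\max}_{i^*}} v_i$ at which the minimum is attained by $v_{i^*}$. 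The residual ``fully-tight'' configuration where no such $(i^*, j^*)$ exists --- essentially forcing $m = k$, $|v_i| = k - 1$ for all $i$, and every $I$ saturating (ii), matching the original GM-MDS setup --- is rigid enough to admit a direct argument, for instance by extracting the leading $x$-coefficient of the putative dependence and invoking nonvanishing of a generalized Vandermonde determinant in $\F[\A]$.
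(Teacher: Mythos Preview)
The paper does \emph{not} prove \Cref{conj:vecmds}; it explicitly states that ``it is still open to prove \Cref{conj:vecmds} in full generality,'' and establishes only the restricted case $V^*(k)$ (\Cref{thm:vecstarmds}), where all but the last coordinate of each $v_i$ lie in $\{0,1\}$. So there is no proof in the paper to compare against --- you are proposing an argument for a statement the authors leave open.

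Your inductive reduction (divide out $q_{\bigwedge v_i}$, then decrement a coordinate) is sound where it applies, and the supermodularity/lattice observation about tight sets is correct. The gap is in the residual case, which you mischaracterize. Your decrement $v_{i^*} \mapsto v_{i^*} - e_{j^*}$ increases $|P(k,\cV)| = \sum_i (k-|v_i|)$ by one, so once $\bigwedge_i v_i = 0$, condition~(ii) for $I=[m]$ survives the decrement only if $\sum_i(k-|v_i|) \le k-1$. Hence the residual case is exactly ``$\bigwedge_i v_i = 0$ and $[m]$ tight,'' i.e.\ $\sum_i(k-|v_i|) = k$. This does \emph{not} force $m=k$, $|v_i|=k-1$, or tightness of every $I$. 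For instance $k=4$, $n=3$, $m=2$, $v_1=(2,1,0)$, $v_2=(0,0,1)$ satisfies $V(k)$, has $\bigwedge v_i = 0$ and $[m]$ tight, yet $m\ne k$ and $|v_2|\ne k-1$.

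More seriously, the original GM-MDS set system ($m=k$, $v_i\in\{0,1\}^n$, $|v_i|=k-1$) is itself a residual instance, so your proposed ``direct generalized-Vandermonde'' endgame would have to subsume the entire content of the paper's main theorem --- and strictly more, since the residual class also contains vectors with entries $\ge 2$ in several coordinates, which the paper's techniques (in particular \Cref{lemma:onevec}, whose proof is the one place property~(iii) is actually used) do not cover. The $k\times k$ determinant you would be ``invoking'' is precisely the object whose nonvanishing is the conjecture; there is no off-the-shelf identity that evaluates it. In short, your induction reduces the general conjecture to its hardest instances rather than to something tractable.
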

A clarifying remark: as we view the set $P(k,\cV)$ as a multiset, \Cref{conj:vecmds} (and \Cref{thm:vecstarmds} below) imply in particular that the polynomials in $P(k,\cV)$ are all distinct, so $P(k,\cV)$ is in fact a set.

\subsection{An intermediate case}

We prove \Cref{conj:vecmds} under an additional assumption, which is sufficient to prove \Cref{conj:gmmds}. It is still open
to prove \Cref{conj:vecmds} in full generality.

\begin{definition}[Property $V^*(k)$]
Let $\cV=\{v_1,\ldots,v_m\} \subset \N^n$ and $k \ge 1$ be an integer. We say that $\cV$ satisfies $V^*(k)$ if
it satisfies $V(k)$, and additionally it satisfies:
\begin{enumerate}[(i)]
\item[(iii)] $v_i \in \{0,1\}^{n-1} \times \N$ for all $i \in [m]$. Namely, all coordinates in $v_i$, except perhaps the last,
are in $\{0,1\}$.
\end{enumerate}
\end{definition}

\begin{theorem}
\label{thm:vecstarmds}
Let $\cV \subset \N^n$ and $k \ge 1$. Assume that $\cV$ satisfies $V^*(k)$. Then the polynomials $P(k,\cV)$ are linearly independent over $\F(\A)$.
\end{theorem}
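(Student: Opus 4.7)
My plan is to prove \Cref{thm:vecstarmds} by induction on the number of coordinates $n$. The base case $n=1$ is immediate: each $v_i$ is a single non-negative integer, each $P(k,v_i)$ consists of the polynomials $(x-a_1)^{v_i(1)} x^e$ of distinct $x$-degrees, and condition $V(k)$ forces enough constraints on the multiplicities across the $v_i$'s to rule out any cross-cancellation.

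For the inductive step, I would write each $v_i = \mathbf{1}_{S_i} + t_i e_n$ with $S_i \subseteq [n-1]$ and $t_i \in \N$, and suppose for contradiction that
\[
\sum_{i=1}^m \sum_{e=0}^{k-1-|v_i|} w_{i,e}(\A)\, x^e\, (x-a_n)^{t_i} \prod_{j \in S_i}(x-a_j) = 0
\]
holds with some $w_{i,e} \in \F[\A]$ not all zero. The strategy is to reduce to an $(n-1)$-coordinate instance by specializing $a_{j^*} \leftarrow a_n$ for some carefully chosen $j^* \in [n-1]$. Under this merge, coordinate $j^*$ is absorbed into the special last coordinate: the new system $\cV' \subseteq \N^{n-1}$ has last entry $v_i'(n) = t_i + v_i(j^*)$, all earlier entries still lie in $\{0,1\}$, so condition (iii) of $V^*$ is preserved automatically, and each generator $(x-a_{j^*})(x-a_n)^{t_i}$ collapses to $(x-a_n)^{t_i+1}$, making the specialized identity an honest linear dependence over $P(k,\cV')$.

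The heart of the proof is a combinatorial lemma: after possibly enlarging $\cV$ to a system that is maximal with respect to $V^*(k)$, there exists $j^* \in [n-1]$ for which the merged system $\cV'$ still satisfies condition (ii) of $V(k)$. The danger is that $\min_i(\alpha_i+\beta_i) \ge \min_i \alpha_i + \min_i \beta_i$, so merging two coordinates can strictly inflate $|\bigwedge_{i \in I} v_i|$ on a previously tight subset $I$ and violate the MDS-style inequality. Establishing the existence of a safe $j^*$ will require a Hall-type matching argument over the lattice of tight subsets of $\cV$, leveraging the $V^*$ hypothesis that every coordinate except the last is binary to keep the combinatorics tractable. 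Granting the lemma, the inductive hypothesis applied to $\cV'$ shows that the specialized dependence is trivial; this translates into divisibility relations $(a_{j^*} - a_n) \mid w_{i,e}$ (perhaps after regrouping $w_{i,e}$'s whose polynomials collapse to the same element of $P(k,\cV')$). Iterating across several admissible choices of $j^*$, and combining with degree bounds on the $w_{i,e}$, should then drive every $w_{i,e}$ to zero.

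The principal obstacle is twofold. First, the combinatorial lemma on the existence of a safe merge direction is the crux of the argument and appears to require the full strength of the $V^*$ hypothesis, explaining why the proof succeeds for \Cref{thm:vecstarmds} but not for the full \Cref{conj:vecmds}. Second, because merging can cause distinct polynomials of $P(k,\cV)$ to coincide in $P(k,\cV')$, the divisibility obtained from a single merge concerns only a coarser grouping of the $w_{i,e}$'s, so extracting individual vanishing statements requires either iterating the merge over several coordinates or carefully exploiting the structure of the collapses.
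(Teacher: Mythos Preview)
Your central combinatorial lemma---that after passing to a maximal system there is always a safe merge coordinate $j^*\in[n-1]$---is false as stated. Take $n=k=m=3$ with $v_1=(1,1,0)$, $v_2=(1,0,1)$, $v_3=(0,1,1)$. This system satisfies $V^*(3)$, every nonempty $I\subseteq[3]$ is tight, and the system is already maximal (each $|v_i|=k-1$ and $\sum_i(k-|v_i|)=k$). Merging $j^*=1$ into the last coordinate gives $v_1'=(1,1)$, $v_2'=(0,2)$, $v_3'=(1,1)$; then for $I=\{1,3\}$ one gets $(k-|v_1'|)+(k-|v_3'|)+|v_1'\wedge v_3'|=1+1+2=4>3$, violating~(ii). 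Merging $j^*=2$ fails symmetrically. So no Hall-type argument over tight subsets can produce the lemma, and a pure induction on $n$ via merging cannot go through.

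The paper's proof is a minimal-counterexample argument inducting on $(n,k,m,d)$ simultaneously, not on $n$ alone. Your merge step is indeed one of the reductions (decreasing $n$), but it is only applied \emph{after} two other reductions have eliminated the obstruction above. First, one reduces $k$ to arrange $\bigwedge_{i\in[m]}v_i=0$. Second---and this is the step your outline lacks---one shows that a minimal counterexample has no tight $I$ with $1<|I|<m$: if such $I$ exists, replace the vectors $\{v_i:i\in I\}$ by the single vector $v_I=\bigwedge_{i\in I}v_i$, obtaining a strictly smaller system that still satisfies $V^*(k)$ and whose polynomial family $P(k,\cdot)$ spans the same space. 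Once every proper $I$ of size $\ge 2$ has slack at least~$1$, that slack absorbs the possible $+1$ a merge adds to $|v_I'|$; for $I=[m]$ one picks the merge coordinate so that some $v_i$ vanishes in both merged positions. In your counterexample the tight pair $\{1,2\}$ would trigger this splitting reduction, not a merge.

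A smaller point: once the merged system $\cV'$ satisfies $V^*(k)$, the inductive hypothesis gives linear independence of $P(k,\cV')$ as a multiset, so no collapses occur. A single merge plus the standard normalization that the $w_{i,e}$ share no common factor then forces $(a_{j^*}-a_n)\mid w_{i,e}$ for all $i,e$, an immediate contradiction; iterating over several $j^*$ with degree bounds is unnecessary.
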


\Cref{conj:gmmds-alg} follows directly from \Cref{thm:vecstarmds}. If $S_1,\ldots,S_k \subset [n]$
are sets which satisfy the assumptions of \Cref{conj:gmmds-alg}, then their indicator
vectors $v_1,\ldots,v_k \in \{0,1\}^n$ satisfy the assumptions of \Cref{thm:vecstarmds},
and hence $P(\{S_1,\ldots,S_k\})=P(k,\{v_1,\ldots,v_k\})$ are linearly independent over $\F(\A)$.

\subsection{General distance}
The rows of a $k \times n$ MDS matrix generates a linear code in $\F^n$ whose minimal distance is $d=n-k+1$. Namely, any vector
in the subspace spanned by the rows has at most $n-d=k-1$ zeros. One can ask a more general question: given parameters $k \le n$ and $d \le n-k+1$,
what are the necessary and sufficient conditions on the zero pattern of a code with minimal distance $d$.

As it turns out, this more general question reduces to the one about MDS codes.

\begin{corollary}
Let $k \le n$ and $d \le n-k+1$. Let $S_1,\ldots,S_k \subseteq [n]$. A necessary condition for the existence of a $k \times n$ matrix $A$ over any field,
such that the code spanned by the rows of the matrix has minimal distance at least $d$, and such that $A_{i,j}=0$ whenever $j \in S_i$, is
$$
|I| + \left| \bigcap_{i \in I} S_i \right| \le n-d+1 \qquad \forall I \subseteq [k], I \text{ nonempty}.
$$
It is also a sufficient condition over any field $\F$ of size $|\F| \ge 2n-d$.
\end{corollary}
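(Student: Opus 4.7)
My plan is to prove both directions of the equivalence by reducing to the MDS case $d = n-k+1$ already established in \Cref{thm:main}. The key auxiliary parameter is $k' := n - d + 1$, which is the dimension of a hypothetical MDS code of length $n$ and minimum distance $d$.

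For necessity, I would suppose such a matrix $A$ exists and (as is standard for a generator matrix) has rank $k$. Fix a nonempty $I \subseteq [k]$ and let $T := \bigcap_{i \in I} S_i$. Every row $A_i$ with $i \in I$ vanishes on the coordinates in $T$, so the span of $\{A_i : i \in I\}$ is an $|I|$-dimensional subspace consisting of vectors supported on $[n] \setminus T$. Restricting this subspace to the coordinates $[n] \setminus T$ is injective and produces a linear code of length $n - |T|$ and dimension $|I|$; weights are preserved by this restriction, so the resulting minimum distance is at least $d$. The Singleton bound then gives $|I| \le (n - |T|) - d + 1$, which is exactly the required inequality.

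For sufficiency, I would extend the set system by defining $S_i' := S_i$ for $i \in [k]$ and $S_i' := \emptyset$ for $k < i \le k'$. A short case analysis verifies that $S_1', \ldots, S_{k'}'$ satisfies the MDS condition at parameter $k'$: any $I \subseteq [k']$ that meets $\{k+1, \ldots, k'\}$ has $\bigcap_{i \in I} S_i' = \emptyset$ and $|I| \le k'$, while any $I \subseteq [k]$ reduces to the hypothesis of the corollary, whose right-hand side $n - d + 1$ equals $k'$. Applying \Cref{thm:main} then yields a $k' \times n$ MDS matrix $A'$ over any field $\F$ with $|\F| \ge n + k' - 1 = 2n - d$ realizing this extended zero pattern. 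The submatrix $A$ formed by the first $k$ rows of $A'$ has the original zero pattern, and its row span is contained in the MDS code generated by $A'$, which has minimum distance $n - k' + 1 = d$; hence the code generated by $A$ has distance at least $d$.

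I do not expect any serious obstacle: the argument is a clean bookkeeping reduction, and the only things to verify carefully are that the padded system still satisfies the MDS condition at the new parameter $k'$ and that the new field-size bound $n + k' - 1$ simplifies to $2n - d$.
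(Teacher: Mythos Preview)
Your proposal is correct and matches the paper's approach: for sufficiency you do exactly what the paper does (pad with empty sets up to $k'=n-d+1$, verify the MDS condition, apply \Cref{thm:main}, and keep the first $k$ rows), and for necessity your Singleton-bound argument on the code restricted to $[n]\setminus T$ is just a repackaging of the paper's direct construction of a codeword with at least $n-d+1$ zeros.
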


\begin{proof}
We first show that the conditions are necessary. Assume the condition is violated for some $I$. Then there are $|I|$ rows with at least $n-d+2-|I|$ common zeros. Pick any $|I|-1$ other coordinates; there is
some linear combination of the rows in $I$ which is zero in these coordinates. So this linear combination has $(n-d+2-|I|)+(|I|-1) = n-d+1$ many zeros, a contradiction
to the minimal distance being at least $d$.

To show that the conditions are sufficient, consider the set system $S_1,\ldots,S_k, S_{k+1}=\ldots=S_{n-d+1} = \emptyset$. It satisfies that
$$
|I| + \left| \bigcap_{i \in I} S_i \right| \le n-d+1 \qquad \forall I \subseteq [n-d+1], I \text{ nonempty}.
$$
The claim follows by applying \Cref{thm:main} to this set system.
\end{proof}

\subsection{Related work}
As we already discussed, the GM-MDS conjecture was suggested by
\cite{dau2014existence}, and partial results were obtained by~\cite{halbawi2014distributed,heidarzadeh2017algebraic,yildiz2018further}.
Shortly after posting this result in arXiv~\cite{lovett2018proof}, we were informed by Yildiz and Hassibi~\cite{yildiz2018optimum} 
that they too have found a proof of the GM-MDS conjecture. Inspecting their proof, 
it is similar in spirit to our proof, in the sense that both proofs generalize the original GM-MDS conjecture, in order to facilitate an
inductive argument. More specifically, our approach is to allow multiple roots at a distinguished point, 
while their approach is to allow general multiplicities of sets.

\subsection{Open problems}
We already discussed \Cref{conj:vecmds}. A more general open problem is the following. Let $S_1,\ldots,S_k \subset [n]$ be a set system.
Let $A$ be a $k \times n$ matrix over some field, such that $A_{i,j}=0$ whenever $j \in S_i$. If we make no assumptions on the set system,
then some $k \times k$ minors of $A$ are forced to be singular (this happens when the set system, restricted to the minor, violates the MDS condition). 
The question is: what is the minimal field size, for which there exists a matrix where all minors which are not forced to be singular are nonsingular.

This question arises naturally in the study of Maximally Recoverable (MR) codes, where the minors which are forced to be 
singular are determined by the underlying topology of the code. 
The GM-MDS conjecture which we prove is the special case where no minor is forced to be singular. 
In this case, very small fields (of size $n+k-1$) are sufficient.
However, in general there is no reason for nice algebraic constructions to exist. 
Two recent works~\cite{kane2017independence,gopi2017maximally} have shown that in specific situations, exponential field size is needed. However,
the proof techniques are highly specialized to these specific cases.

This raises the following natural conjecture: most set systems require exponential field size.

\begin{conjecture}
Let $S_1,\ldots,S_k \subset [n]$ be chosen randomly, by including each $j \in S_i$ independently with probability $1/2$. 
Assume that there exists a $k \times n$ matrix $A$ over a field $\F$ that satisfies:
\begin{enumerate}[(i)]
\item $A_{i,j}=0$ whenever $j \in S_i$.
\item Any $k \times k$ minor of $A$, which is not forced to be singular by (i), is nonsingular.
\end{enumerate} 
Then with high probability over the choice of the set system, $|\F| \ge c {n \choose k}^c$, where $c>0$ is some absolute constant.
\end{conjecture}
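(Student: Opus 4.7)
The conjecture asks for a polynomial-in-${n \choose k}$ lower bound on the field size for random maximally recoverable (MR) instances. My plan is to combine existing structural hardness results with a substructure-containment argument: show that random systems contain, with high probability, small sub-instances whose MR realizations are already known to demand a large field.

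First, I would extract from the lower-bound techniques in \cite{kane2017independence, gopi2017maximally} a minimal ``hard core'' configuration: a small set system $T_1,\ldots,T_{k'}\subset[n']$ together with a proof that any MR matrix for it requires $|\F|\ge f(k',n')$ for some super-polynomial $f$. Those lower bounds typically produce, via algebraic or determinantal identities, a high-degree polynomial obstruction whose zero set must be avoided by the free entries of the matrix; bounding the degree yields the field-size bound. The goal is to obtain $f(k',n')\ge 2^{\Omega(k')}$ so that by choosing $k'=\Theta(\log {n \choose k})$ one hits the target ${n \choose k}^{\Omega(1)}$.

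Second, I would prove a containment lemma: when each $j$ is included in $S_i$ independently with probability $1/2$, with high probability there exist $R\subseteq[k]$ and $C\subseteq[n]$ with $|R|=k'$, $|C|=n'$ such that $\{S_i\cap C:i\in R\}$ realizes the target hard configuration after relabelling. The event that a fixed $(R,C)$ matches a given pattern has probability $2^{-k'n'}$, so a first-moment computation over the ${k \choose k'}{n \choose n'}$ choices shows the expected number of copies is huge for suitable parameters; a second-moment concentration argument would then upgrade to high probability. Third, I would establish a restriction lemma: if $A$ is MR for $(S_1,\ldots,S_k)$, then $A[R,C]$ is MR for $(S_i\cap C)_{i\in R}$, transferring the field-size lower bound from the sub-instance to the whole instance.

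The main obstacle, I expect, is the restriction step. The notion of ``forced singular'' depends on the full set system, and under restriction one may manufacture new forced-singular minors that were not present in the hard-core analysis, thereby weakening or invalidating the transferred lower bound. Overcoming this will likely require a carefully chosen hard core whose obstructions are preserved under arbitrary embeddings -- perhaps a configuration whose non-singularity conditions are witnessed by minors supported entirely inside $R\times C$. A secondary obstacle is step one: the strongest published lower bounds may fall short of the conjectured ${n \choose k}^{\Omega(1)}$ threshold, so a quantitative strengthening of \cite{gopi2017maximally} -- perhaps exploiting the abundant genericity of random instances directly, rather than going through an explicit hard core -- may itself be a substantial undertaking before the reduction becomes meaningful.
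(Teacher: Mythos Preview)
The statement you are addressing is posed in the paper as an \emph{open conjecture}, not a theorem; the paper offers no proof and explicitly lists it under ``Open problems.'' There is therefore nothing in the paper to compare your proposal against.

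As for the proposal itself, it is a research outline rather than a proof, and you have already correctly identified its two main gaps. The restriction step is genuinely problematic: if $A$ is MR for $(S_1,\ldots,S_k)$, the submatrix $A[R,C]$ need not even have full row rank, let alone be MR for the restricted system, since ``not forced to be singular'' is a global combinatorial condition that does not localize. Any embedding argument would have to control not just the pattern inside $R\times C$ but also enough of the surrounding structure to guarantee that the relevant minors remain unconstrained and that $A[R,C]$ inherits the needed nonsingularities. Second, the lower bounds of \cite{kane2017independence,gopi2017maximally} are for highly structured topologies (grid/tensor-type), and it is not clear that a ``hard core'' of size $k'=\Theta(\log{n\choose k})$ with a $2^{\Omega(k')}$ bound can be extracted in a form that appears as an induced subpattern of a random $\{0,1\}$ system with usable probability. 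Until both of these are resolved, the plan remains a heuristic; it does not constitute a proof of the conjecture.
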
 

The conjecture basically says that for most set systems, the probabilistic construction which requires field size ${n \choose k}$ cannot be
significantly improved.

\paragraph*{Acknowledgement.} I thank Hoang Dau and Sankeerth Rao for a careful reading of an earlier version of this paper.

\section{Proof of \Cref{thm:vecstarmds}}

Let $n,k \ge 1$. Let $\cV=\{v_1,\ldots,v_m\} \subset \N^n$ which satisfies $V^*(k)$. We will prove that the polynomials $P(k,\cV)$ are linearly
independent over $\F(\A)$.

To that end, we assume that $\cV$ is a minimal counter-example and derive a contradiction.
Concretely, the underlying parameters are $n,k,m$ and $d=|P(k,\cV)| = \sum k-|v_i|$. We will assume that if $\cV'$ is a set of vectors
with corresponding parameters $n' \le n,k' \le k,m' \le m,d' \le d$ with at least one of the inequalities being sharp, then \Cref{thm:vecstarmds}
holds for $\cV'$. In particular, we assume that $m \ge 2$, as \Cref{thm:vecstarmds} clearly holds when $m=1$.

To help the reader, we note that the following lemmas construct such $\cV'$ with the following parameters:
\begin{itemize}
\item \Cref{lemma:minb_zero}: $n,k-1,m,d$.
\item \Cref{lemma:tight}: $n,k,e,d'$ and $n,k,m-e+1,d''$ with $2 \le e \le m-1$ and $d',d''<d$.
\item \Cref{lemma:onevec}: $n-1,k,m,d$.
\item \Cref{lemma:n_equals_k}: $n,k,m,d-1$.
\end{itemize}
We use the following notation to simplify the presentation:
$$
v_I := \bigwedge_{i \in I} v_i \qquad I \subseteq [m].
$$
We introduce sometimes in the proofs an auxiliary set $\cV'=\{v'_1,\ldots,v'_{m'}\}$, in which case $v'_I$ for $I \subseteq [m']$ are defined analogously. Below, when we say that $\cV$ or $\cV'$ satisfy (i), (ii) or (iii), we mean the relevant items in the definition of $V^*(k)$.

Given two vectors $u,v \in \N^n$ we denote $u \le v$ if $u(i) \le v(i)$ for all $i \in [n]$.

\begin{lemma}
\label{lemma:dominate}
There do not exist distinct $i,j \in [m]$ such that $v_i \le v_j$.
\end{lemma}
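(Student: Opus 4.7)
The plan is to derive a contradiction directly from property $V(k)$, without invoking the minimal counter-example hypothesis or condition (iii). Suppose toward contradiction that distinct $i, j \in [m]$ satisfy $v_i \le v_j$. Then coordinate-wise, $v_{\{i,j\}} = v_i \wedge v_j = v_i$, so in particular $\bigl|v_{\{i,j\}}\bigr| = |v_i|$.

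Now apply condition (ii) of $V^*(k)$ to the index set $I = \{i, j\}$ of size two. This gives
$$
(k - |v_i|) + (k - |v_j|) + |v_i| \le k,
$$
which simplifies to $|v_j| \ge k$. But condition (i) demands $|v_j| \le k - 1$, a contradiction.

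I do not expect any obstacle: this is a syntactic one-line consequence of combining (i) and (ii) on the pair $\{i,j\}$, and the minimality setup plays no role. Notice also that the lemma in fact holds for any $\cV$ satisfying the weaker property $V(k)$, and its role later will be to let us treat the vectors $v_1,\ldots,v_m$ as pairwise incomparable when carrying out the remaining structural arguments.
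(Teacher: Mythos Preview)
Your proof is correct and essentially identical to the paper's: both apply condition (ii) to $I=\{i,j\}$, use $v_i \wedge v_j = v_i$ to simplify, and obtain $|v_j|\ge k$, contradicting (i). Your added remark that only $V(k)$ is needed (not (iii) or minimality) is accurate and matches the paper's implicit usage.
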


\begin{proof}
Assume the contrary. Applying (i) to $j$ gives $|v_j| \le k-1$. Applying (ii) to $I=\{i,j\}$ gives
$$
(k-|v_i|) + (k-|v_j|) + |v_i \wedge v_j| \le k.
$$
As $v_i \le v_j$ we have $v_i \wedge v_j = v_i$, and hence obtain that $k-|v_j| \le 0$, a contradiction.
\end{proof}

\Cref{lemma:dominate} implies in particular that $n \ge 2$. This is since if $n=1$ then necessarily $m=1$,
as otherwise there would be $i,j$ for which $v_i \le v_j$. So we assume $n \ge 2$ from now on.

\begin{lemma}
\label{lemma:minb_zero}
$\bigwedge_{i \in [m]} v_i = 0$.
\end{lemma}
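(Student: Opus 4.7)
The plan is to assume for contradiction that $w := \bigwedge_{i \in [m]} v_i \ne 0$, pick a coordinate $j^* \in [n]$ with $w(j^*) \ge 1$, and then produce a strictly smaller instance $\cV'$ of $V^*$ whose associated polynomial family is obtained from $P(k,\cV)$ by dividing out a common factor of $(x-a_{j^*})$.

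Concretely, I would define $v'_i := v_i - e_{j^*}$ for each $i \in [m]$ (this is nonnegative because $v_i(j^*) \ge w(j^*) \ge 1$) and take $k' := k-1$. The first task is to check that $\cV' := \{v'_1,\ldots,v'_m\}$ satisfies $V^*(k-1)$. Condition (i) is immediate since $|v'_i| = |v_i|-1 \le k-2$. For (ii), the key observation is that $(v'_I)(j) = v_I(j)$ for $j \ne j^*$ while $(v'_I)(j^*) = v_I(j^*) - 1$, so $|v'_I| = |v_I| - 1$, and
\[
\sum_{i \in I}(k' - |v'_i|) + |v'_I| \;=\; \sum_{i \in I}(k-|v_i|) + |v_I| - 1 \;\le\; k-1 \;=\; k',
\]
using condition (ii) for $\cV$. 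For (iii), note that if $j^* < n$ then $v_i(j^*) \in \{0,1\}$ forces $v_i(j^*) = 1$, so $v'_i(j^*)=0$; if $j^* = n$ we are only modifying the last coordinate, which is unrestricted. In both cases $v'_i \in \{0,1\}^{n-1}\times\N$.

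Next I would compare $P(k-1,\cV')$ with $P(k,\cV)$. Because $|v'_i| = |v_i|-1$, the two sets $P(k-1,v'_i)$ and $P(k,v_i)$ have the same size $k-|v_i|$, and the map $q(x) \mapsto (x - a_{j^*}) \cdot q(x)$ sends
\[
\prod_{j}(x-a_j)^{v'_i(j)} x^e \;\longmapsto\; \prod_{j}(x-a_j)^{v_i(j)} x^e, \qquad e=0,\ldots,k-1-|v_i|,
\]
giving a bijection from $P(k-1,v'_i)$ onto $P(k,v_i)$. Summing over $i$, multiplication by $(x-a_{j^*})$ is a bijection between $P(k-1,\cV')$ and $P(k,\cV)$ as multisets.

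Finally, minimality of the counterexample kicks in: the parameters of $\cV'$ are $(n, k-1, m, d)$, strictly smaller than those of $\cV$ in the $k$-coordinate, so by the inductive assumption $P(k-1,\cV')$ is linearly independent over $\F(\A)$. Since $(x-a_{j^*}) \in \F(\A)[x]$ is a nonzero polynomial, multiplication by it preserves linear independence, and therefore $P(k,\cV)$ is also linearly independent over $\F(\A)$. This contradicts the assumption that $\cV$ is a counterexample, so $\bigwedge v_i = 0$. The argument is essentially routine once the reduction is set up; the only subtlety is verifying that condition (iii) is preserved in the case $j^* < n$, but this works automatically because such a coordinate can only contain $0$ or $1$ and a positive minimum forces the value $1$.
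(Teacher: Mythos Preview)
Your argument is correct and matches the paper's proof essentially line for line: both subtract $e_{j^*}$ from every $v_i$, verify $V^*(k-1)$ for the resulting system, invoke minimality with parameters $(n,k-1,m,d)$, and then observe that $P(k,\cV)$ is obtained from $P(k-1,\cV')$ by multiplying each polynomial by the nonzero factor $(x-a_{j^*})$. Your explicit check of condition~(iii), splitting into the cases $j^*<n$ and $j^*=n$, is a bit more careful than the paper's one-word ``clearly,'' but otherwise the proofs coincide.
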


\begin{proof}
Assume not. Then there exists a coordinate $j \in [n]$ with $v_i(j) \ge 1$ for all $i \in [m]$.
Define a new set of vectors $\cV'=\{v'_1,\ldots,v'_m\} \subset \N^n$ as follows:
$$
v'_i := (v_i(1),\ldots,v_i(j-1),v_i(j)-1,v_i(j+1),\ldots,v_i(n)).
$$
In words, $v'_i$ is defined from $v_i$ by decreasing coordinate $j$ by $1$.

We first show that $\cV'$ satisfies $V^*(k-1)$. Note that $|v'_i|=|v_i|-1$.
It clearly satisfies (i),(iii). To show that it satisfies (ii) let $I \subseteq [m]$.
We have
$$
\sum_{i \in I} (k-1-|v'_i|) + |v'_I| = \sum_{i \in I} (k-|v_i|) + |v_I| - 1 \le k-1.
$$

As we showed that $\cV'$ satisfies $V^*(k-1)$, the minimality of $\cV$ implies that the polynomials $P(k-1,\cV')$ are linearly
independent over $\F(\A)$. The lemma follows as it is simple to verify that
$$
P(k,\cV) = \{p(\A,x) (x-a_j): p \in P(k-1,\cV')\}.
$$
In particular, the linear independence of $P(k-1,\cV')$ implies the linear independence of $P(k,\cV)$.
\end{proof}

\begin{definition}[Tight constraint]
A set $I \subseteq [m]$ is \emph{tight for $\cV$} if property (ii) holds with equality for $I$. Namely if
$$
\sum_{i \in I} (k-|v_i|) + |v_I| = k.
$$
\end{definition}

Note that if $|I|=1$ then $I$ is always a tight constraint.
The following lemma is an extension of Lemma 2(i) in \cite{yildiz2018further}. It shows that in a minimal counter-example
there are no tight sets, except for singletons and perhaps the whole set.

\begin{lemma}
\label{lemma:tight}
If $I \subseteq [m]$ is a tight constraint, then $|I|=1$ or $|I|=m$.
\end{lemma}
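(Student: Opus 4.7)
The plan is to assume for contradiction that $I \subseteq [m]$ is tight with $2 \le e := |I| \le m-1$, take $I = \{1,\ldots,e\}$ without loss of generality, and split $\cV$ into two smaller instances to which minimality applies. Define
$$
\cV_1 := \{v_1,\ldots,v_e\}, \qquad \cV_2 := \{v_I,\, v_{e+1},\ldots, v_m\},
$$
and set $q(x) := \prod_{j \in [n]}(x-a_j)^{v_I(j)}$, a polynomial of degree $|v_I|$. The aim is to apply minimality to both $\cV_1$ and $\cV_2$, obtain linear independence of $P(k,\cV_1)$ and $P(k,\cV_2)$, and combine these to conclude that $P(k,\cV)$ is linearly independent, contradicting that $\cV$ is a counter-example.

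First I would verify that $\cV_1$ and $\cV_2$ each satisfy $V^*(k)$ with strictly smaller $m$ than $\cV$. For $\cV_1$ this is immediate: (i),(iii) are inherited, (ii) for $J \subseteq [e]$ is just (ii) for $\cV$ on $J \subseteq [m]$, and $e < m$. For $\cV_2$, condition (i) for $v_I$ follows from tightness via $|v_I| = k - \sum_{i \in I}(k-|v_i|) \le k - e \le k - 2$ (using (i) for the $v_i$), and (iii) is inherited coordinate-wise. The delicate case of (ii) is a subset $J$ of $\cV_2$ containing $v_I$; writing $J = \{v_I\} \cup J'$ with $J' \subseteq \{e+1,\ldots,m\}$, I would apply (ii) of $\cV$ to $I \cup J' \subseteq [m]$ and use the tightness identity $\sum_{i \in I}(k-|v_i|) = k - |v_I|$, obtaining
$$
(k - |v_I|) + \sum_{j \in J'}(k - |v_j|) + |v_I \wedge v_{J'}| \le k,
$$
which is precisely (ii) for $\cV_2$ on $J$. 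Since $e \ge 2$, $\cV_2$ has $m - e + 1 < m$ vectors, so the minimality hypothesis yields linear independence of both $P(k,\cV_1)$ and $P(k,\cV_2)$ over $\F(\A)$.

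The final step combines these independences via a short dimension argument. Every polynomial in $P(k,\cV_1)$ has degree $\le k-1$ and is divisible by $q$ (since $v_I \le v_i$ coordinate-wise for each $i \in I$); by tightness the count $|P(k,\cV_1)| = \sum_{i \in I}(k-|v_i|) = k - |v_I|$ equals the $\F(\A)$-dimension of the space of degree-$\le k-1$ multiples of $q$, so the (now independent) set $P(k,\cV_1)$ is a basis for that space, as is $P(k, v_I) = \{q \cdot x^\ell : 0 \le \ell \le k - 1 - |v_I|\}$. Given any purported relation $\sum_{i=1}^m \sum_{p \in P(k, v_i)} c_{i,p}\, p = 0$, I would split it as $f_1 + f_2 = 0$ with $f_1$ the $i \le e$ part and $f_2$ the $i > e$ part; then $-f_2 = f_1$ lies in $\mathrm{span}(P(k, v_I))$, so $-f_2 = \sum_{p \in P(k, v_I)} d_p \, p$ for some $d_p \in \F(\A)$. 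Adding this expression to $f_2$ gives a linear relation in $P(k,\cV_2)$, which forces all the $d_p$ and all $c_{i,p}$ with $i > e$ to vanish. Hence $f_2 = 0$, so $f_1 = 0$, and independence of $P(k,\cV_1)$ then forces the remaining $c_{i,p}$ to vanish, the desired contradiction.

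The main obstacle I anticipate is the verification of (ii) for $\cV_2$ on a subset containing the new vector $v_I$: this is exactly where tightness is used crucially, converting the constraint on $I \cup J'$ in the original instance into the required constraint on $\{v_I\} \cup J'$ in the reduced instance. Without tightness, the substitution would only yield an inequality weaker than (ii), and the whole reduction would break.
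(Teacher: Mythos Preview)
Your proof is correct and follows essentially the same approach as the paper. Up to relabeling which indices form $I$, your $\cV_1$ and $\cV_2$ are exactly the paper's auxiliary systems $\cV''=\{v_i:i\in I\}$ and $\cV'=\{v_I\}\cup\{v_i:i\notin I\}$; both arguments apply minimality to these two smaller instances and then use the dimension count (tightness gives $|P(k,\cV_1)|=k-|v_I|$, the dimension of degree-$\le k-1$ multiples of $q$) to identify $\mathrm{span}\,P(k,\cV_1)$ with $\mathrm{span}\,P(k,v_I)$, after which the conclusion is immediate. Your final step manipulates a hypothetical linear relation directly, whereas the paper phrases it as showing $P(k,\cV)$ and $P(k,\cV')$ span the same space and have the same cardinality, but these are the same argument.
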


\begin{proof}
Assume towards a contradiction that there exist a tight $I$ with $1<|I|<m$.
We will use the minimality of $\cV$ to derive a contradiction. Assume for simplicity
of notation that $I=\{e,\ldots,m\}$ for $2 \le e \le m-1$. Define a new set of vectors
$\cV'=\{v'_1,\ldots,v'_e\}$ given by
$$
v'_1 := v_1, \ldots, v'_{e-1} := v_{e-1}, v'_e := v_I.
$$

We first show that $\cV'$ satisfies $V^*(k)$. It clearly satisfies (i) and (iii). To see that it satisfies (ii) let $I' \subseteq [e]$.
If $e \notin I'$ then $\cV'$ satisfies (ii) for $I'$ as it is same condition as for $\cV$, so assume $e \in I'$.
Let $I'' = I' \cup \{e+1,\ldots,m\}$. Then
$$
\sum_{i \in I'} (k-|v'_i|) + |v'_{I'}| = \sum_{i \in I''} (k-|v_i|) + |v_{I''}| \le k,
$$
where the equality holds since $k-|v'_e| = \sum_{i \in I} (k-|v_i|)$ since we assume $I$ is tight, and since by definition of $I''$ we have $v'_I=v_{I''}$.

As we assume that $\cV$ is a minimal counter-example for \Cref{thm:vecstarmds}, the theorem holds for $\cV'$. So,
the polynomials $P(k,\cV')$ are linearly independent. Observe that $|P(k,\cV')|=|P(k,\cV)|$ since
$$
|P(k,\cV')| = \sum_{i \in [e]} (k-|v'_i|) = \sum_{i \in [m]} (k-|v_i|) = |P(k,\cV)|.
$$
Thus, it will suffice to prove that $P(k,\cV)$ and $P(k,\cV')$ span the same space of polynomials over $\F(\A)$. To that end,
it suffices to prove that $F:=P(k,\{v_e,\ldots,v_m\})$ and $F':=P(k,v'_e)$ span the same space of polynomials.

Let us shorthand $v=v'_e$.
Define the polynomial $p(\A,x):=\prod_{j \in [n]} (x-a_j)^{v(j)}$. Observe that $p$ divides all polynomials in $F,F'$.
Moreover, $F'=\{p(\A,x) x^d: d=0,\ldots,k-1-|v|\}$ spans the linear space of all multiples of $p$ of degree $\le k-1$.
As $|F|=|F'|$ it suffices to prove that $F$ are linearly independent over $\F(\A)$, as then they must span the same linear space.
However, this follows from the minimality of $\cV$, since $F=P(k,\cV'')$ for $\cV''=\{v_e,\ldots,v_m\}$.
\end{proof}

The following lemma identifies a concrete vector that must exist in a minimal counter-example. It is in its proof
that we actually use the assumption that $\cV$ satisfies (iii), namely $V^*(k)$ and not merely $V(k)$.

\begin{lemma}
\label{lemma:onevec}
There exists $i \in [m]$ such that $v_i=(1,\ldots,1,0)$.
\end{lemma}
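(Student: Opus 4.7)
The plan is to prove the lemma by contradiction: assuming no $v_i$ equals $(1,\ldots,1,0)$, I would construct $\cV'\subset\N^{n-1}$ satisfying $V^*(k)$ with parameters $(n-1,k,m,d)$, such that a linear dependence in $P(k,\cV)$ would descend to a linear dependence in $P(k,\cV')$. The minimality of $\cV$ then yields a contradiction.

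First I would apply \Cref{lemma:minb_zero}: since $\bigwedge v_i=0$, some $v_{i^*}$ satisfies $v_{i^*}(n)=0$, and by property (iii) this $v_{i^*}$ is the indicator of some set $S_{i^*}\subseteq[n-1]$. By the contradiction hypothesis, $S_{i^*}\ne[n-1]$, so I can pick a coordinate $j\in[n-1]\setminus S_{i^*}$; note $v_{i^*}(j)=v_{i^*}(n)=0$. I would then define $\cV'=\{v'_1,\ldots,v'_m\}\subset\N^{n-1}$ by ``merging coordinate $j$ into coordinate $n$'': each $v'_i$ agrees with $v_i$ on $[n-1]\setminus\{j\}$, and the new last coordinate is set to $v_i(n)+v_i(j)$. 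Since $|v'_i|=|v_i|$, the count $d$ is preserved.

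Next I would verify that $\cV'$ satisfies $V^*(k)$. Properties (i) and (iii) are immediate, because $v_i(j)\in\{0,1\}$ is absorbed into the unconstrained new last coordinate. For (ii), a short computation gives
$$|v'_I|-|v_I|=\min_{i\in I}\bigl(v_i(n)+v_i(j)\bigr)-v_I(n)-v_I(j),$$
which always lies in $\{0,1\}$ since $v_i(j)\in\{0,1\}$. Singletons give difference $0$. For $I=[m]$, the witness $v_{i^*}$ forces $\min_i(v_i(n)+v_i(j))=0$, so the difference is $0$. For every other $I$, \Cref{lemma:tight} guarantees $I$ is strictly non-tight, so the slack of at least $1$ absorbs the possible $+1$.

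Finally I would observe that the substitution $a_j\mapsto a_n$ sends each generator $\prod_\ell(x-a_\ell)^{v_i(\ell)}x^e$ of $P(k,v_i)$ to the corresponding generator of $P(k,v'_i)$, so $P(k,\cV)\mapsto P(k,\cV')$ termwise. Given a nontrivial relation $\sum w_i(\A)p_i=0$ in $\F[\A,x]$ for $P(k,\cV)$, I would divide every $w_i$ by the maximal common power of $(a_j-a_n)$ and then set $a_j=a_n$, producing a nontrivial relation for $P(k,\cV')$ and contradicting the linear independence of $P(k,\cV')$ (which follows from the minimality of $\cV$). The main delicate step is verifying (ii) for $I=[m]$; this is also the only place where property (iii) is actually used, since it is the constraint $v_i(j)\in\{0,1\}$ that keeps $|v'_I|-|v_I|\le 1$.
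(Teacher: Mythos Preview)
Your proposal is correct and follows essentially the same approach as the paper: pick $i^*$ with $v_{i^*}(n)=0$, find a second zero coordinate $j\in[n-1]$ of $v_{i^*}$, merge the two coordinates, verify $V^*(k)$ for the merged system using \Cref{lemma:tight} for intermediate $I$ and the witness $v_{i^*}$ for $I=[m]$, and finish by a substitution/common-factor argument. The only cosmetic differences are that the paper merges coordinate $n$ into $n-1$ (after relabeling $j^*=n-1$) and substitutes $a_n\mapsto a_{n-1}$, whereas you merge $j$ into $n$ and substitute $a_j\mapsto a_n$; and the paper clears all common factors at once rather than just powers of $(a_j-a_n)$. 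One small remark: your closing sentence slightly misplaces where property~(iii) bites---the bound $|v'_I|-|v_I|\le 1$ is what handles the intermediate $1<|I|<m$ case (via non-tightness), while the $I=[m]$ case is handled directly by the witness $v_{i^*}$ having both coordinates zero, independently of~(iii).
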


\begin{proof}
\Cref{lemma:minb_zero} guarantees that there exists $i^* \in [m]$ for which $v_{i^*}(n)=0$.
We will prove that $v_{i^*}=(1,\ldots,1,0)$. If not, then by (iii) there exists $j^* \in [n-1]$ be such that $v_{i^*}(j^*)=0$.
For simplicity of notation assume that $i^*=m,j^*=n-1$.
Define a new set of vectors $\cV'=\{v'_1,\ldots,v'_m\} \subset \N^{n-1}$
as follows:
$$
v'_i := \left(v_i(1),\ldots,v_i(n-2), v_i(n-1)+v_i(n)\right).
$$
In words, $v'_i \in \N^{n-1}$ is obtained by adding the last two coordinates of $v_i \in \N^n$.

We first show that $\cV'$ satisfies $V^*(k)$. Note that $|v'_i|=|v_i|$.
It clearly satisfies (i),(iii). To show that it satisfies (ii) let $I \subseteq [m]$.
Note that (ii) always holds if $|I|=1$, so we may assume $|I|>1$. We have by definition
\begin{equation}
\label{eq:onevec}
\sum_{i \in I} (k-|v'_i|) + |v'_I| - v'_I(n-1) = \sum_{i \in I} (k-|v_i|) + |v_I|
- v_I(n-1) - v_I(n) .
\end{equation}

First, consider first the case where $|I|<m$. \Cref{lemma:tight} gives that $I$ is not tight, and hence
$$
\sum_{i \in I} (k-|v_i|) + |v_I| \le k-1.
$$
As $\cV$ satisfies (iii) we have $v_i(n-1) \in \{0,1\}$ for all $i$. This implies $v_I(n-1) \in \{0,1\}$
and $v'_I(n-1) \in \{v_I(n), v_I(n)+1\}$. So \Cref{eq:onevec} gives
$$
\sum_{i \in I} (k-|v'_i|) + |v'_I| \le \sum_{i \in I} (k-|v_i|) + |v_I|+1 \le k.
$$

Next, consider the case of $|I|=m$. As $v_m(n-1)=v_m(n)=0$ we have $v'_m(n-1)=0$ and hence
$v_I(n-1)=v_I(n)=v'_I(n-1)=0$.
\Cref{eq:onevec} then gives
$$
\sum_{i \in I} (k-|v'_i|) + |v'_I| = \sum_{i \in I} (k-|v_i|) + |v_I| \le k.
$$

As we showed that $\cV'$ satisfies $V^*(k)$, the minimality of $\cV$ implies that the polynomials $P(k,\cV')$ are linearly
independent over $\F(\A)$. We next show that this implies that $P(k,\cV)$ are also linearly independent over $\F(\A)$.

Let $s_i := k-|v_i|$ for $i \in [m]$.
We have $P(k,\cV)=\{p_{i,e}: i \in [m], e \in [s_i]\}$ and $P(k,\cV')=\{p'_{i,e}: i \in [m], e \in [s_i]\}$ where
\begin{align*}
&p_{i,e}(\A,x) := x^{e-1} \prod_{j \in [n-2]} (x-a_j)^{v_i(j)} \; \cdot \; (x-a_{n-1})^{v_i(n-1)} (x-a_n)^{v(n)}\;, \\
&p'_{i,e}(\A,x) := x^{e-1} \prod_{j \in [n-2]} (x-a_j)^{v_i(j)} \; \cdot \; (x-a_{n-1})^{v_i(n-1)+v_i(n)}\;.
\end{align*}
Observe that $p'_{i,e}$ can be obtained from $p_{i,e}$ by substituting $a_{n-1}$ for $a_n$. Namely
$$
p'_{i,e}(a_1,\ldots,a_{n-1},x) = p_{i,e}(a_1,\ldots,a_{n-1},a_{n-1},x).
$$
Assume towards a contradiction that $\{p_{i,e}\}$ are linearly dependent over $\F(\A)$. Equivalently, there exist polynomials $w_{i,e}(\A)$, not all zero,
such that
$$
\sum_{i \in [m]} \sum_{j \in [s_i]} w_{i,e}(\A) p_{i,e}(\A,x) = 0.
$$
We may assume that the polynomials $\{w_{i,e}\}$ do not all have a common factor, as otherwise we can divide them by it. Let $w'_{i,e}(\A)$
be obtained from $w_{i,e}(\A)$ by substituting $a_{n-1}$ for $a_n$. That is, $w'_{i,e}(a_1,\ldots,a_{n-1})=w_{i,e}(a_1,\ldots,a_{n-1},a_{n-1})$.
Then we obtain
$$
\sum_{i \in [m]} \sum_{j \in [s_i]} w'_{i,e}(\A) p'_{i,e}(\A,x) = 0.
$$
As the polynomials $\{p'_{i,e}\}$ are linearly independent over $\F(\A)$, this implies that $w'_{i,e} \equiv 0$ for all $i,e$.
That is, the polynomials $w_{i,e}$ satisfy
$$
w_{i,e}(a_1,\ldots,a_{n-1},a_{n-1}) \equiv 0.
$$
This implies that $(a_{n-1}-a_n)$ divides $w_{i,e}$ for all $i,e$, which is a contradiction to the assumption that $\{w_{i,e}\}$
do not all have a common factor.
\end{proof}

\Cref{lemma:onevec} implies that the vector $(1,\ldots,1,0)$ belongs to $\cV$. Without loss of generality,
we may assume that it is $v_m$. This implies that $v_i(n) \ge 1$ for all $i \in [m-1]$, as otherwise
we would have $v_i \le v_m$, violating \Cref{lemma:dominate}.

\begin{lemma}
\label{lemma:n_equals_k}
$n=k$.
\end{lemma}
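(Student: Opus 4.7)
Start by observing that property (i) applied to $v_m=(1,\ldots,1,0)$ (available from \Cref{lemma:onevec}) already gives $n-1=|v_m|\le k-1$, so $n\le k$ automatically. The plan is therefore to rule out $n\le k-1$ by assuming this for contradiction and producing an auxiliary counter-example $\cV'$ with parameters $n,k,m,d-1$ satisfying $V^*(k)$; the minimality of $\cV$ will then force $P(k,\cV')$ to be linearly independent, and I will lift this to linear independence of $P(k,\cV)$, which is the desired contradiction.

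The natural construction is to set $v'_i:=v_i$ for $i<m$ and $v'_m:=(1,\ldots,1,1)$, so that $|v'_m|=n$ and $d'=d-1$. Condition (i) then becomes exactly the assumption $n\le k-1$, and (iii) is immediate. The real work is verifying (ii), and the key ingredient is the observation recorded just before this lemma: $v_i(n)\ge 1$ for every $i<m$, since otherwise $v_i\le v_m$ would contradict \Cref{lemma:dominate}. Given this, for any $I\ni m$ with $|I|\ge 2$ a coordinate-by-coordinate inspection using condition (iii) of $\cV$ shows that $v'_I$ agrees with $v_I$ on the first $n-1$ coordinates while $v'_I(n)=1$ and $v_I(n)=0$. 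Thus $|v'_I|=|v_I|+1$, which exactly compensates the increase $|v'_m|=|v_m|+1$, and (ii) for $I$ in $\cV'$ reduces to (ii) for $I$ in $\cV$. The cases $I=\{m\}$ (which gives $(k-n)+n=k$) and $I\not\ni m$ are trivial, so $\cV'$ satisfies $V^*(k)$, and by minimality $P(k,\cV')$ is linearly independent.

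The closing step is to upgrade linear independence of $P(k,\cV')$ to that of $P(k,\cV)$, contradicting the counter-example hypothesis. Every polynomial in $P(k,v'_m)$ lies in the span of $P(k,v_m)$ via the identity $\prod_{j\le n}(x-a_j)\,x^e=\prod_{j<n}(x-a_j)\,x^e\,(x-a_n)$, so $\mathrm{span}\,P(k,\cV')\subseteq\mathrm{span}\,P(k,\cV)$ and it suffices to exhibit one polynomial in $P(k,\cV)$ outside $\mathrm{span}\,P(k,\cV')$. The candidate is $q(x):=\prod_{j<n}(x-a_j)\in P(k,v_m)$: every element of $\mathrm{span}\,P(k,\cV')$ vanishes at $x=a_n$ because $v_i(n)\ge 1$ for $i<m$ and $v'_m(n)=1$, whereas $q(a_n)=\prod_{j<n}(a_n-a_j)\ne 0$ in $\F(\A)$. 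Consequently $\dim\mathrm{span}\,P(k,\cV)\ge d$, so $P(k,\cV)$ is linearly independent, the required contradiction. The most delicate point in the plan is the verification of (ii) for sets $I$ with $m\in I$ and $|I|\ge 2$, and it goes through precisely because the $\{0,1\}$-bound from (iii) and the established lower bound $v_i(n)\ge 1$ combine to make the two changes $|v'_m|=|v_m|+1$ and $|v'_I|=|v_I|+1$ cancel.
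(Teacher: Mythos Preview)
Your proof is correct and follows essentially the same approach as the paper: the same auxiliary family $\cV'$ with $v'_m=(1,\ldots,1,1)$, the same verification of $V^*(k)$ via the cancellation $|v'_I|=|v_I|+1$ against $|v'_m|=|v_m|+1$, and the same closing observation that every polynomial in $P(k,\cV')$ is divisible by $(x-a_n)$ whereas $q=\prod_{j<n}(x-a_j)$ is not. One minor remark: the appeal to condition~(iii) in checking~(ii) is unnecessary---the first $n-1$ coordinates of $v'_I$ and $v_I$ agree simply because $v'_i$ and $v_i$ differ only at coordinate $n$, and the last-coordinate analysis uses only $v_m(n)=0$ and $v_i(n)\ge 1$ for $i<m$.
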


\begin{proof}
Let $v_m=(1,\ldots,1,0)$. We know by (i) that $n-1 = |v_m| \le k-1$, so $n \le k$. Assume towards a contradiction that $n<k$.
Define a new set of vectors $\cV'=\{v'_1,\ldots,v'_m\} \subset \N^n$ as follows:
$$
v'_1 := v_1, \ldots, v'_{m-1} := v_{m-1}, v'_m := (1,\ldots,1,1).
$$
In words, we increase the last coordinate of $v_m$ by $1$.

We claim that $\cV'$ satisfies $V^*(k)$. It satisfies (i) by our assumption that $|v'_m|=n \le k-1$, and it satisfies (iii)
clearly. To show that it satisfies (ii), let $I \subseteq [m]$. If $m \notin I$ then
it clearly satisfies (ii) for $I$, as it is the same constraint as for $\cV$, so assume $m \in I$. In this case we have
$$
\sum_{i \in I} (k-|v'_i|) + |v'_I| = \left( \sum_{i \in I} (k-|v_i|) - 1 \right) + \left( |v_I| + 1 \right) \le k.
$$

Note that $|P(k,\cV')|=|P(k,\cV)|-1$. As $\cV$ is a minimal counter-example, we have that $\cV'$ satisfies $V^*(k)$. Let $p(\A,x):=\prod_{j \in [n-1]}(x-a_j)$.
The construction of $\cV'$ satisfies that $P(k,\cV)$ and $P(k,\cV') \cup \{p\}$ span the same linear space of polynomials over $\F(\A)$. This is since
$v'_i=v_i$ for $i=1,\ldots,m-1$ and since
$$
P(k,\{v_m\}) = \{p x^e: e=0,\ldots,n-k\}
$$
and
$$
P(k,\{v'_m\}) \cup \{p\} = \{p (x-a_n) x^e: e=0,\ldots,n-k-1\} \cup \{p\}
$$
both span the linear space of polynomials which are multiples of $p$ and of degree $\le k-1$.

Denote for simplicity of presentation the polynomials of $P(k,\cV')$ by $p_1,\ldots,p_{d-1}$, where $d=|P(k,\cV)|$.
Assume that the polynomials $P(k,\cV)$ are linearly dependent. As $P(k,\cV')$ are linearly independent, it implies that there exist
polynomials $w,w_1,\ldots,w_{d-1} \in \F[\A]$, where $w \ne 0$, such that
$$
w(\A)p(\A,x) + \sum_{i=1}^{d-1} w_i(\A) p_i(\A,x) \equiv 0.
$$
Note that by construction, $v'_i(n) \ge 1$ for all $i \in [m]$. This implies that $p_1,\ldots,p_{d-1}$ are all divisible
by $(x-a_n)$, while $p$ is not. Substituting $x=a_n$ then gives $w \equiv 0$, a contradiction.
\end{proof}

We can now reach a contradiction to $\cV$ being a counter-example. We know that $v_m=(1,\ldots,1,0)$
with $|v_m|=n-1=k-1$.
Let $\cV'=\{v_1,\ldots,v_{m-1}\}$. As it satisfies $V^*(k)$ we have that the polynomials
$P(k,\cV')$ are linearly independent. Moreover, as $|v_m|=k-1$ we have $P(k,v_m) = \{p\}$
where $p(\A,x)=\prod_{j \in [n-1]} (x-a_j)$.
Note that all polynomials in $P(k,\cV')$ are divisible by $(x-a_n)$, while $p$ is not.
So by the same argument as in the proof of \Cref{lemma:n_equals_k}, $P(k,v_m)$ cannot be linearly dependent of $P(k,\cV')$.
So $P(k,\cV)$ are linearly independent.

\bibliographystyle{alpha}
\bibliography{GMMDS}

\end{document}